\documentclass[a4paper]{article}
\usepackage[utf8]{inputenc}

\usepackage[margin=1.3in]{geometry}

\usepackage{amsmath,amssymb,amscd,amsthm,amsfonts}
\usepackage{graphicx,subcaption}
\usepackage{hyperref}
\usepackage{dsfont}
\usepackage{xcolor}
\usepackage{color,soul}
\usepackage[utf8]{inputenc}
\usepackage[capitalise]{cleveref}
\usepackage{thm-restate}

\newtheorem{theorem}{Theorem}[section]

\newtheorem{lemma}[theorem]{Lemma}

\newtheorem{conjecture}[theorem]{Conjecture}

\theoremstyle{definition}

\bibliographystyle{plainurl}
\graphicspath{{figures/}}

\author{Johanna Ockenfels \\
        Dept. of Computer Science, ETH Z\"{u}rich, Switzerland \\ {\tt jockenfels@student.ethz.ch}
        \and
        Yoshio Okamoto \\
        Dept. of Computer and Network Engineering, \\The University of Electro-Communications, Japan \\ {\tt okamotoy@uec.ac.jp}
        \and
        Patrick Schnider \\
        Dept. of Mathematics and Computer Science,
        University of Basel\\
        Dept. of Computer Science, ETH Z\"{u}rich, Switzerland \\ {\tt patrick.schnider@inf.ethz.ch}
}

\title{Chasing puppies on orthogonal straight-line plane graphs\footnote{This work was initiated during the 21st Gremo's Workshop on Open Problems in Pura, Switzerland. The authors would like to thank the other participants for the lively discussions. The second author acknowledges the support by JSPS KAKENHI Grant Number JP23K10982.}}
\date{}

\begin{document}

\maketitle

\begin{abstract}
Assume that you have lost your puppy on an embedded graph. You can walk around on the graph and the puppy will run towards you at infinite speed, always locally minimizing the distance to your current position. Is it always possible for you to reunite with the puppy? We show that if the embedded graph is an orthogonal straight-line embedding the answer is yes.
\end{abstract}

\section{Introduction}

The puppy chasing problem was proposed by Michael Biro at the 2013 edition of the Canadian Conference on Computational Geometry. While inspired by problems in beacon-base routing and originally illustrated by train tracks and locomotives, like many other problems in discrete and computational geometry it has a dog-related illustration, which was proposed by the authors who solved Biro's original problem \cite{abrahamsen2022chasing}.

Assume you are walking with your puppy on an embedded graph. Suddenly, you realize that the puppy is not with you anymore. Luckily, you can see each other and the puppy also wants to reunite with you. While the puppy can run infinitely fast, its behavior is very naive: it runs as close to you as possible, always locally improving the distance to you. Is there always a route you can walk, so that the puppy will eventually reunite with you?

To phrase it in mathematical terms, let $G$ be a finite graph and let $\gamma\colon G\rightarrow\mathbb{R}^2$ be a (crossing-free) embedding of the graph. We interpret $G$ as a topological space. Any pair $(x,y)\in G\times G$ then defines locations $h=\gamma(x)$ and $p=\gamma(y)$ in the plane $\mathbb{R}^2$, which we interpret as the positions of the human and the puppy, respectively. Denoting by $d(x,y)$ the Euclidean distance between $h=\gamma(x)$ and $p=\gamma(y)$, we say that a configuration $(x,y)\in G\times G$ is \emph{stable} if there is some $\varepsilon>0$ such that for all $y'\in G$ with $d(y,y')\leq\varepsilon$ we have $d(x,y')\geq d(x,y)$. In other words, a configuration is stable if the puppy cannot locally decrease its distance to the human. For a non-stable configuration there is thus at least one direction in which the puppy can decrease its distance to the human, in which case it will do so. If there are several such directions, the puppy chooses an arbitrary one. See Figure \ref{fig:puppies_example}. 

\begin{figure}
    \centering
    \includegraphics[scale = 1]{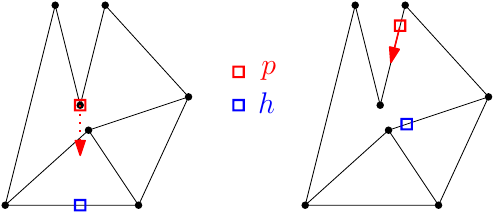}
    \caption{A stable configuration (left), and an unstable one (right).}
    \label{fig:puppies_example}
\end{figure}

If $G$ is a cycle, and the embedding is either smooth or piece-wise linear, it was shown by Abrahamsen, Erickson, Kostitsyna, L\"{o}ffler, Miltzow, Urhausen, Vermeulen and Viglietta (AEKLMUVV) that the human can always catch the puppy \cite{abrahamsen2022chasing}, solving Biro's original question. While there are embeddings where the human does not even need to move to catch the puppy, e.g., the unit circle, there are also examples where if moving the wrong way the human will never catch the puppy, see, e.g., Figure \ref{fig:star}. If self-intersections are allowed (think of one edge crossing over the other via a bridge), then there are drawings where the human can never catch the puppy. An example of this is the double loop depicted in Figure \ref{fig:double_circle}.

\begin{figure}
    \centering
    \includegraphics[scale = 1]{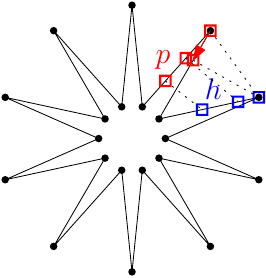}
    \caption{If the human $h$ continues walking clockwise, human and puppy $p$ will never be reunited.}
    \label{fig:star}
\end{figure}

\begin{figure}
    \centering
    \includegraphics[scale = 1]{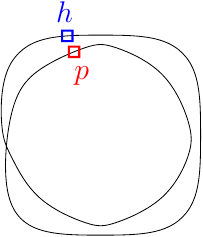}
    \caption{No matter how the human $h$ moves, human and puppy $p$ will never be reunited.}
    \label{fig:double_circle}
\end{figure}

It is an interesting open problem to characterize the drawings of a cycle on which the human can always catch the puppy. Towards this, Brunck, L\"{o}ffler and Silveira have recently shown that the rotation number does not give such a characterization by constructing a curve with rotation number 1 (i.e., the same rotation number as a crossing-free embedding), where the human sometimes cannot catch the puppy \cite{loffler4}.

\subsection{Our results}
In this work, we extend Biro's question from embeddings of circles to embeddings of general graphs. Concretely, motivated by the many results on beacon routing and related problems in orthogonal domains, we show the following theorem.

\begin{restatable}{theorem}{orthogonal_puppies}\label{thm:orthogonal_puppies}
Let $G$ be a finite connected graph and let $\gamma$ be an orthogonal straight-line embedding of $G$ in the plane. Then, there is a strategy for the human to catch the puppy on $\gamma$.
\end{restatable}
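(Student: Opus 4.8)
The plan is to first pin down the stable configurations exactly, exploiting that every edge is axis-parallel. Fixing the human at $h=(a,b)$, the distance to $h$ restricted to any edge is strictly convex, so on the interior of a horizontal edge the only stable point is the one whose $x$-coordinate equals $a$ (when it lies on the edge), and on a vertical edge the only stable point is the one whose $y$-coordinate equals $b$. At a vertex $v$ with incident edge-direction set $D\subseteq\{\pm e_1,\pm e_2\}$, the vertex is stable exactly when $\langle h-v,u\rangle\le 0$ for every $u\in D$; because the directions are axis-parallel this is just a conjunction of coordinate comparisons between $h$ and $v$. This gives a fully explicit, piecewise-linear description of the stable set $S(x)$ and, crucially, shows that as the human slides along an edge the interior stable points slide monotonically in the matching coordinate. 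I would then set up the dynamics formally: as the human moves continuously, the puppy's stable position varies continuously while it remains stable and slides (instantaneously) to a new stable point when stability is lost; a strategy is a reactive walk of the human, and ``catching'' is reaching a configuration with $\gamma(x)=\gamma(y)$.

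With the local picture in hand, the second step is to design the human's walk. I would begin with the tree case: root a spanning structure and have the human perform a depth-first traversal, proving by induction on subtrees the invariant that once the human has fully explored and returned from a subtree, the puppy has either been caught or been driven out of that subtree towards the root. The monotone sliding of stable points established above is exactly what makes the inductive step go through: exploring a pendant edge to its far end forces any puppy on it to the tip or off the edge, and backtracking never lets it re-enter the cleared region. When the whole tree has been swept, the puppy is cornered at the root and caught. For a general connected orthogonal graph I would handle the cycles by induction on the cyclomatic number, interleaving the tree traversal with a cycle-catching subroutine: whenever the puppy could otherwise escape by circulating around a cycle, the human runs a strategy on that cycle, invoking (or re-deriving in the orthogonal setting) the single-cycle result of Abrahamsen et al.\ \cite{abrahamsen2022chasing}.

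The hard part, I expect, is precisely the cycles, since the counterexamples of \cref{fig:star} and \cref{fig:double_circle} show that without the right structure the puppy can circulate forever, and the rotation-number obstruction of Brunck, L\"{o}ffler and Silveira \cite{loffler4} shows that no cheap topological invariant settles the matter. The orthogonality must be used in an essential way, and the place I would look for it is a global monovariant: the coordinate-wise, monotone behaviour of stable points suggests a potential --- for instance a lexicographic function of the puppy's coordinates relative to a fixed far corner, or the ``amount of cleared graph behind the frontier'' --- that strictly progresses under the chosen moves and cannot cycle. Making such a potential simultaneously compatible with the tree traversal, the cycle subroutines, and the discontinuous jumps of the puppy between stable branches is the main technical obstacle; the subtlety is that a jump could in principle undo progress, so the potential and the order of moves must be chosen so that every jump is also monotone. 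If a single clean monovariant proves elusive, the fallback is a careful topological-degree or parity argument on the stable set $S(x)$ as the human traverses each cycle, mirroring the cycle argument but adapted to the explicit orthogonal stable structure.
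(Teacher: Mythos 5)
There is a genuine gap --- in fact two. First, your tree-case invariant (``backtracking never lets the puppy re-enter the cleared region'') is false as stated, because the puppy responds to \emph{Euclidean} distance in the plane, not to distance along the graph, so a combinatorially ``cleared'' subtree offers no protection. Concretely: take a horizontal path with a vertical edge hanging down from its midpoint and another hanging down from its left end. Suppose the human has explored the middle pendant edge and backtracked, then walks down the left pendant edge while the puppy sits on the path to the right of the middle junction. The puppy runs left along the path, reaches the junction, and then \emph{descends} the cleared middle edge, since moving down decreases its Euclidean distance to the human, who is now below the level of the path. Any invariant of this kind must be defined geometrically rather than combinatorially. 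Second, you explicitly leave the cyclic case unresolved, and the proposed fix --- invoking the single-cycle theorem of Abrahamsen et al.\ \cite{abrahamsen2022chasing} as a subroutine --- does not apply: that theorem concerns a closed curve in isolation, whereas here the cycle is a subgraph, the human and puppy may be attached elsewhere, the puppy can leave the cycle mid-chase, and its motion on the cycle is driven by a human who is not on it. So the plan has no complete argument for either trees or cycles.

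The idea you are missing, and which makes both difficulties evaporate simultaneously, is to define the forbidden region by a \emph{horizontal sweep line} rather than by subgraphs. The key fact (Lemma~\ref{lemLower} in the paper) is that if the puppy is not above the horizontal line through the human, then --- because every edge is axis-parallel --- the puppy can only cross that line upwards if the human does: crossing a vertical edge upward strictly increases the distance to a human who stays at or below the line. Consequently the human walks to the topmost horizontal edge of the not-yet-forbidden region, declares everything above it forbidden, and handles that edge (if it is a bridge, step into the component containing the puppy and forbid the rest; otherwise just leave it and forbid it). The non-forbidden region stays connected, always contains both players, and loses at least one edge per round, so the process terminates with both on a single edge. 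Cycles never require special treatment, there is no induction on the cyclomatic number, and the monovariant is simply the number of surviving edges. Your instincts about ``coordinate-wise monotone behaviour'' and a ``frontier'' were pointing in the right direction, but the argument only closes once the frontier is the geometric sweep line at the top, not a combinatorial boundary of explored subtrees; the remaining work in the paper is then the non-generic case where several topmost edges share a height, which needs a small structural lemma (the nesting of components below the top line forms a path) rather than any new dynamical idea.
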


Here, an orthogonal straight-line embedding is an embedding of the graph where every edge is mapped to either a horizontal or a vertical line segment, see Figure \ref{fig:orthogonal_embedding} for an example.
Note that not every graph admits an orthogonal straight-line embedding, so we are implicitly restricting our attention to those that do. Also note that \emph{orthogonal drawings}, where edges can be drawn as polygonal lines with vertical and horizontal parts can be seen as orthogonal straight-line embeddings of a subdivision of $G$, so Theorem \ref{thm:orthogonal_puppies} immediately extends to orthogonal drawings.

\begin{figure}
    \centering
    \includegraphics[scale = 1]{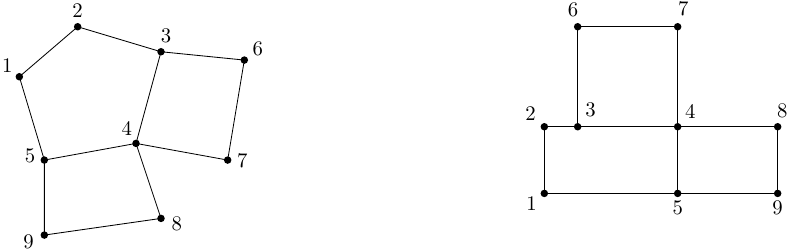}
    \caption{A straight-line embedding (left) and an orthogonal straight-line embedding (right) of the same graph.}
    \label{fig:orthogonal_embedding}
\end{figure}

Unlike the proof of AEKLMUVV that uses topological methods to show the existence of a strategy, we give a recursive strategy that is simple to describe. For orthogonal polygons (that is, orthogonal straight-line embeddings of a circle), AEKLMUVV also give a simple strategy (see \cite{abrahamsen2022chasing}, Theorem 2), which differs from our strategy. On the other hand, while we conjecture that Theorem \ref{thm:orthogonal_puppies} should hold for any straight-line embedding of a graph, our strategy heavily relies on orthogonality. We leave the following as an open problem.

\begin{conjecture}
Let $G$ be a finite graph and let $\gamma$ be a straight-line embedding of $G$ in the plane. Then, there is a strategy for the human to catch the puppy on $\gamma$.
\end{conjecture}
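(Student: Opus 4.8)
The plan is to keep the high-level scheme that drives the orthogonal case but to replace its coordinate-based sweeping by a geometric argument that survives arbitrary edge slopes. First I would put $\gamma$ in general position: after a generic rotation and an infinitesimal perturbation of the vertices, no two vertices share a coordinate and the ``degenerate'' human positions---those at which a stable puppy position is a non-strict local minimum, or at which two stable positions are equidistant from the human---are isolated. Since the human may perturb their own route, it suffices to win for generic $\gamma$ and pass to the limit. The central object is the stable set $\Sigma=\{(x,y)\in G\times G:(x,y)\text{ is stable}\}$ with the catch diagonal $\Delta=\{(x,x)\}\subseteq\Sigma$; a winning strategy is a human-controlled trajectory inside $\Sigma$ that starts at the initial configuration and reaches $\Delta$. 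Along the interior of an edge the puppy coordinate slides to the unique closest point, because $t\mapsto d(\gamma(x),\gamma(y(t)))$ is strictly convex, while at a vertex the puppy may adversarially choose any edge along which the distance decreases.

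This strict convexity is the geometric engine of the proof: each edge carries at most one stable puppy position, so the stable positions organize into finitely many continuous arcs that appear, merge, or vanish only at isolated saddle--node events, reducing the a priori infinite pursuit to a finite combinatorial system on the arcs of $\Sigma$. On this structure I would induct on the first Betti number $b_1(G)$, the number of independent cycles. The base cases are trees---where, having no cycles, the human can herd the puppy monotonically toward a chosen leaf and catch it---and single cycles, which are exactly the setting solved by AEKLMUVV and quoted in the introduction. For the inductive step I would delete an edge $e$ lying on the boundary of a bounded face; this strictly lowers $b_1$, so by induction the human wins on $G-e$, and the remaining task is to lift that strategy to $G$ while preventing the extra edge from giving the puppy a durable refuge.

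The monovariant I would track is a potential $\Phi(x,y)$ that combines the Euclidean distance $d(x,y)$ with a planar ``side'' count recording how many cycles of $G$ presently separate puppy from human, read off from the rotation system and face structure. The human walks so that $\Phi$ never increases, and strictly decreases off a negligible set; this forces the configuration into $\Delta$. That $\Phi$ is a genuine monovariant reduces to a local case analysis: a smooth slide of the puppy along a segment cannot raise $d$ faster than the human's own motion, and at a vertex the human can pre-position so that every available descent edge keeps $\Phi$ from rising.

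The hard part---and the reason the conjecture is still open---is the puppy's infinite-speed jumps at saddle--node events: when the tracked local minimum disappears, the puppy flows instantaneously to another stable position that may lie far across the graph. In the orthogonal case the axis-aligned geometry confines such a jump to a single coordinate band and so cannot reverse progress, but with slanted edges a jump can in principle thread a chain of segments into a region of larger potential. The crux is thus a ``no bad jump'' lemma: for a human path kept transverse to every bifurcation locus, each jump must land at a stable position whose potential is no larger than before. I expect its proof to rest on a planarity argument---that the descending Euclidean flow on a straight-line plane graph cannot cross the curve separating puppy from human---so that the jump necessarily stays on the puppy's side. Establishing this without the orthogonality that currently supplies the required monotonicity is precisely where the difficulty concentrates.
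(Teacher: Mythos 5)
First, a point of fact: the statement you are proving is left \emph{open} in the paper --- it appears exactly as a conjecture, and the paper's actual theorem (Theorem~\ref{thm:orthogonal_puppies}) covers only orthogonal straight-line embeddings, where the whole argument rests on the half-plane invariant of Lemma~\ref{lemLower}: with axis-parallel edges, a puppy at or below the horizontal line through the human never crosses it unless the human does. There is therefore no paper proof to compare against, and your proposal, by your own admission, does not close the conjecture either: the ``no bad jump'' lemma, which you correctly identify as the crux, is stated as an expectation (``I expect its proof to rest on a planarity argument'') rather than proved. A proof attempt whose central lemma is conjectural is a research program, not a proof, and you say as much in your final paragraph.

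Beyond that honest concession, two of your intermediate steps have concrete gaps. The induction on $b_1(G)$ by deleting an edge $e$ does not lift: the puppy's dynamics are global, so a winning strategy on $G-e$ says nothing about configurations where the puppy sits on $e$ or jumps through it, and stable positions on $G$ need not coincide with those on $G-e$ even far from $e$. (The orthogonal proof avoids this precisely by never deleting anything the puppy could still reach --- its forbidden-region invariant guarantees the puppy \emph{provably} never re-enters removed parts, which is the property your deletion lacks.) Note also that your base case of trees is not in the literature and needs its own argument; ``herding monotonically toward a leaf'' is plausible but unproven, since stable configurations exist on trees and the puppy can lodge on a wrong branch. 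Second, the reduction ``it suffices to win for generic $\gamma$ and pass to the limit'' is unjustified: the puppy's trajectory depends discontinuously on the embedding at saddle--node events, the puppy's choices at ties are adversarial, and winning strategies are not known to be stable under perturbation --- this continuity issue is exactly what makes the AEKLMUVV topological argument delicate even for cycles. Your diagnosis of where the difficulty lives (jumps that, absent orthogonality, can carry the puppy across the separating structure into a region of higher potential) is accurate and matches why the paper's authors left the general straight-line case open; but diagnosing the obstruction is not the same as overcoming it.
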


\subsection{Related work}
A natural variant of Biro's problem, which AEKLMUVV call the \emph{guppy} problem, was introduced by Kouhestani and Rappaport \cite{guppy}: the guppy is swimming in a simply connected lake, behaving just like the puppy in our problem at hand. The human, however, can only walk along the shore. Kouhestani and Rappaport conjectured that there is always a strategy for the human to catch the guppy. However, this conjectured was settled in the negative when Abel, Akitaya, Demaine, Demaine, Hesterberg, Korman, Ku and Lynch provided a counterexample consisting of an orthogonal polygon where no such strategy exists \cite{guppy_counterexample}.

As mentioned above, the puppy problem is motivated by the problem of beacon routing. A \emph{beacon} is a point which can be activated to create a magnetic pull. Given a polygonal domain~$P$, the question is now how many beacons need to be placed and activated sequentially to route a point in $P$ to a given goal location. The difference from the puppy problem is thus that beacons cannot move but that several of them can be used. Beacon routing has been studied extensively in the last 10 years, in particular for orthogonal polygonal domains \cite{beacon11,beacon10,beacon8,beacon3,beacon1,beacon2,beacon9,beacon5,beacon12,beacon7,beacon6,beacon4}.

\section{A Strategy for Orthogonal Embeddings}

In this section, we prove Theorem \ref{thm:orthogonal_puppies} in two steps. First, in Section \ref{sec:generic} we show it for the \emph{generic} case, where we assume that no two horizontal line segments of the embedding are on the same height. This restricted setting already highlights the main ideas but is simpler to describe. In Section \ref{sec:general} we then adapt our proof to the non-generic case.

\subsection{The generic case}\label{sec:generic}

We begin by proving the following lemma.
\begin{lemma}
\label{lemLower}
    Given a configuration of the human's position $h$ and the puppy's position $p$ and a horizontal line $l$ through $h$ where $p$ is not above $l$, the puppy will move above $l$ only if the human moves above $l$.
\end{lemma}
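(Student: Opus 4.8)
The plan is to use orthogonality to control the puppy's height (its $y$-coordinate) directly. Write $c$ for the height of $l$, so that when $l$ is defined the human has height $c$, and the hypothesis ``the human does not move above $l$'' means its height stays $\le c$ throughout. I will prove the contrapositive-friendly statement that, as long as the human stays weakly below $l$, the puppy stays weakly below $l$ as well. Since $p$ starts not above $l$ (height $\le c$), it then suffices to show that the puppy's height can never strictly exceed $c$.

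First I would record the local behaviour of the puppy on a single axis-parallel edge, with the human fixed at $h=(h_x,h_y)$ and distance governed by $(p_x-h_x)^2+(p_y-h_y)^2$. On a horizontal edge the height $p_y$ is constant, so moving along it never changes the puppy's height. On a vertical edge only $p_y$ varies, and the squared distance is a strictly convex parabola in $p_y$ minimised at $p_y=h_y$; hence a distance-decreasing motion moves the puppy toward height $h_y$ and, crucially, can raise it only up to height $h_y$ and never past it. Moving \emph{away} from $h_y$ (in particular, upward once the puppy is at height $\ge h_y$) strictly increases the distance.

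The main step is to globalise this into a bound valid along the puppy's actual trajectory, which consists of continuous tracking of a local minimum together with the instantaneous jumps it makes whenever that minimum disappears; during any single jump the human's position is momentarily fixed, and the jump follows a path in $G$ along which the distance to $h$ is strictly decreasing. I would argue by looking at the first time $t^*$ at which the puppy's height reaches $c$ and is poised to exceed it. At $t^*$ the human's height is $h_y\le c$, so the puppy sits at height $c\ge h_y$; any upward continuation is then either (a) a move up a vertical edge, which strictly increases the distance and so is never performed by a distance-improving puppy, or (b) part of a jump, i.e.\ a distance-decreasing path with the human fixed at height $h_y\le c$. I expect case (b) to be the delicate part: I would show that along such a path the height never exceeds $\max(y_0,h_y)\le c$, where $y_0\le c$ is the starting height, because the height increases only while climbing a vertical edge strictly below $h_y$ (capped at $h_y$), is constant on horizontal edges, and cannot ``tunnel'' upward through a vertex, since leaving a vertex of height $\ge h_y$ along an upward vertical edge would again increase the distance. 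Handling these vertex transitions and confirming that the infinite-speed jumps genuinely decompose into distance-decreasing paths is where the real care is needed; the single-edge computation is routine.

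Combining the two cases at $t^*$ yields a contradiction: the puppy's height can never strictly exceed $c$ while the human's height stays $\le c$. Therefore, if the puppy ever moves above $l$, the human must have moved above $l$ first, which is exactly the claim.
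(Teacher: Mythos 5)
Your proof is correct and rests on the same key observation as the paper's: by orthogonality the puppy can cross $l$ only by climbing a vertical edge, and since the human's height stays at most that of $l$, any upward motion past $l$ happens at a height at least the human's and therefore strictly increases the distance, so a distance-improving puppy never performs it. The paper's proof is a terser version of exactly this argument (organized by whether the human moves horizontally or vertically, rather than by your tracking-versus-jump case split); your explicit treatment of the instantaneous jumps merely fills in dynamics that the paper leaves implicit.
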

\begin{proof}
    Since all edges are either parallel or orthogonal to $l$, the only way for the puppy to move above $l$ is to take a vertical edge that crosses $l$. However, since the puppy is currently below or at the same level as the human, no movement of the human can lead to this: If the human moves horizontally, then taking a vertical edge that crosses $l$ can only increase the distance between the two. Similarly, if the human moves vertically without crossing $l$, the puppy will only increase its distance by crossing $l$.
\end{proof}

\begin{proof}[Proof of Theorem \ref{thm:orthogonal_puppies}]
    Let $\gamma$ be an orthogonal straight-line embedding of a graph $G$. We describe an algorithm for the human to catch the puppy. The basic idea is that we decrease the size of the part of the embedding containing the puppy in every step by considering parts of the domain as ``forbidden.'' In the following we assume without loss of generality that at the beginning the puppy is not above the human. The algorithm is the following, see Figure \ref{fig:strategy}.
    
    \begin{itemize}
        \item[1.] Move in the non-forbidden part to the (unique) topmost horizontal edge $e$. All (vertical) parts of edges that are above this position are now forbidden.
        \item[2.] If $e$ is a cut edge (i.e., bridge) of the non-forbidden part, move to the connected component of $\gamma \setminus\{e\}$ that contains the location of the puppy and consider the edge (as well as all other components of the graph) forbidden.
        \item[3.] Otherwise, simply leave the edge and consider it and everything above it forbidden from now on.
        \item[4.] Repeat this process in the non-forbidden part of the embedding.
    \end{itemize}

    \begin{figure}
    \centering
    \includegraphics[scale = 1]{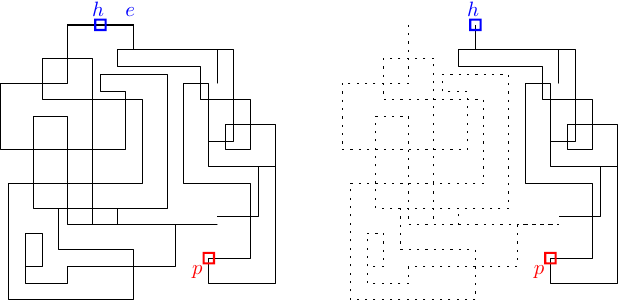}
    \caption{An example of Step 2 of the algorithm.}
    \label{fig:strategy}
\end{figure}

    We want to prove that whenever we label a new part of the domain as ``forbidden,'' neither the human nor the puppy will ever enter that region again. 
    We make an inductive argument. Our induction hypothesis is that before each iteration, the non-forbidden domain is connected and contains both the human and the puppy. Additionally, the forbidden domain has the property that the puppy will not enter any forbidden part if the human stays in the non-forbidden parts. These invariants are trivially true at the beginning of the process. 
    Now consider the configuration at the beginning of any iteration of the algorithm. By the induction hypothesis, the non-forbidden domain is connected and thus the human can move to the topmost edge $e$ without entering any forbidden parts. Again, by the induction hypothesis, the puppy is still in the non-forbidden domain when the human reaches $e$. 
    We argue that the induction hypothesis still holds after cutting off everything above $e$: by Lemma~\ref{lemLower}, the puppy cannot be above the human as $e$ is the topmost edge. Since we only cut off (parts of) vertical edges, the domain stays connected. Again by Lemma~\ref{lemLower}, the puppy will not enter the newly forbidden domain unless the human does.

  Now, there are two cases.
  \begin{enumerate}
    \item[1.] If $e$ is not a cut edge, then the human simply leaves the edge and considers it forbidden from now on. Now, the non-forbidden domain still contains the human and the puppy and is connected since otherwise $e$ must have been a cut edge. Moreover, since $e$ was the unique topmost horizontal edge and the human will never enter the forbidden parts again, by Lemma~\ref{lemLower}, the puppy cannot enter $e$ either.
    
    \item[2.] If $e$ is a cut edge, then the human moves to the connected component that contains the puppy and we forbid $e$ as well as all other connected components. After this, both the human and puppy are still in the non-forbidden domain. By definition, the domain is also still connected. By the same arguments as in the first case, the puppy will stay in the non-forbidden part.
  \end{enumerate}

    In each step the non-forbidden part of the domain decreases by at least one edge. Since there are only finitely many edges, the human and the puppy will eventually be located on the same edge at which point the human has caught the puppy.
\end{proof}

\subsection{Allowing horizontal edges of the same height}\label{sec:general}
Now, we are no longer assuming each horizontal edge to be at a different height, meaning that we need to adjust the strategy from the previous section as the topmost horizontal edge is no longer unique.
Let $m$ be the height of the topmost horizontal edge in $\gamma$, and let $T = \{e_1,\ldots,e_k\}$ be all edges in $\gamma$ at height $m$, ordered from left to right. Now, we consider the connected components of $\gamma\setminus T$ (note that this could be a single component). 
Our goal is to get to a configuration where the human is in the same component as the puppy. Once we are in such a configuration, we can again forbid all edges of $T$ and by Lemma~\ref{lemLower}, the puppy will never leave the component again, unless the human does, meaning that we can recursively continue the strategy on the non-forbidden part.

It thus remains to show that the human can always move into the same component as the puppy.
For each component $C$ of $\gamma\setminus T$ denote by $T(C)$ the set of edges in $T$ it is incident to.
We say that a component $C$ of $\gamma\setminus T$ is a \emph{U-component} if there is another component $C'$ that lies entirely inside the region bounded by $C$ and the horizontal line at height $m$. In this case we say that $C$ \emph{dominates} $C'$.
This gives a nesting of the components, see Figure \ref{fig:components} for an illustration.

\begin{figure}
    \centering
    \includegraphics[scale = 1]{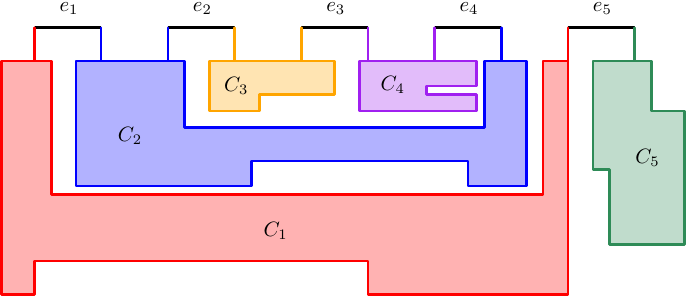}
    \caption{An orthogonal embedding with five components. The components $C_1$ and $C_2$ are U-components. The component $C_1$ dominates $C_2$, which in turn dominates $C_3$ and $C_4$.}
    \label{fig:components}
\end{figure}

\begin{lemma}\label{lem:u_components}
    Let $C_u$ be a U-component and let $\mathcal{D}$ be the set of all components dominated by $C_u$ (including $C_u$). If both human and puppy are on components in $\mathcal{D}$ and the human stays on components in $\mathcal{D}$, then the puppy also stays on components in $\mathcal{D}$.
\end{lemma}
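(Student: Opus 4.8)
The plan is to mimic the structure of Lemma~\ref{lemLower}, but now with the horizontal line at height $m$ together with the left and right vertical ``walls'' of the dominating U-component $C_u$ acting as the barriers that the puppy cannot cross. Let me first set up the geometry precisely. Since $C_u$ is a U-component, it bounds a region together with the line at height $m$; every component in $\mathcal{D}\setminus\{C_u\}$ lies inside this region. I would let $x_L$ and $x_R$ denote the $x$-coordinates of the leftmost and rightmost points of $C_u$, so that $R$, the closed region bounded by $C_u$ and the segment of the line at height $m$ between $x_L$ and $x_R$, contains all of $\mathcal{D}$. The key observation is that every edge of the embedding lying inside $R$ either is part of a component in $\mathcal{D}$ or is separated from the rest of $\gamma$ by the boundary of $R$; since $C_u$ reaches up to height $m$ on both sides and the top boundary is the line at height $m$, any path leaving $\mathcal{D}$ must cross the boundary $\partial R$.

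The central claim I would establish is that the puppy can never cross $\partial R$ while the human stays inside $\mathcal{D}$. There are two types of boundary to worry about: the top (the horizontal line at height $m$) and the two vertical sides (the outermost vertical edges of $C_u$). For the top boundary, the argument is exactly Lemma~\ref{lemLower}: since $m$ is the topmost height in all of $\gamma$, neither human nor puppy is ever above $m$, and by Lemma~\ref{lemLower} the puppy moves above a horizontal line through the human only if the human does, so the puppy cannot escape over the top. For the vertical sides, I would argue symmetrically using a \emph{vertical} analogue of Lemma~\ref{lemLower}: to leave $R$ through the left wall the puppy would have to take a horizontal edge crossing the vertical line $x=x_L$, but since the human stays in $\mathcal{D}\subseteq R$, the human is at all times to the right of (or on) this line, and crossing it to the left can only increase the puppy's distance to the human. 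Hence such a crossing is never a distance-improving move and the puppy will not make it. The same reasoning applies to the right wall.

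I expect the main obstacle to be the vertical-wall argument, because the clean hypothesis of Lemma~\ref{lemLower} — that $p$ is not above $l$ and the human is on $l$ — does not transfer verbatim. On the vertical sides the human need not lie on the wall line $x=x_L$; it only lies weakly on the interior side of it. I therefore need the slightly stronger statement that if the puppy is weakly on the same side of a vertical line as the human, it cannot cross to the other side unless the human does. This requires re-examining the distance argument: a horizontal move of the human on the interior side, or a vertical move, cannot make crossing the wall distance-decreasing for the puppy, precisely because the horizontal coordinate of the human stays on the interior side throughout. A subtle point I would need to verify is that the outermost vertical edges of $C_u$ genuinely form unbroken barriers at $x=x_L$ and $x=x_R$ down to where they meet the rest of $C_u$, so that there is no ``gap'' in the wall through which the puppy could slip at a lower height; this follows from $C_u$ being a single connected component whose extreme vertical edges reach height $m$.

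Once the barrier claim is in place, the lemma follows immediately: the initial configuration has both human and puppy in $\mathcal{D}\subseteq R$, the human by hypothesis never leaves $\mathcal{D}$, and we have shown the puppy can cross neither the top nor the vertical sides of $\partial R$; since any component reachable from $\mathcal{D}$ without crossing $\partial R$ is again a component of $\mathcal{D}$ (as $\partial R$ separates $\mathcal{D}$ from the rest of $\gamma\setminus T$), the puppy remains on components in $\mathcal{D}$ throughout.
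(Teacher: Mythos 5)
Your high-level strategy is the same as the paper's: use orthogonality to argue that, as long as the human's $x$-coordinate stays within the horizontal extent of $\mathcal{D}$, the puppy never has a distance-decreasing move that carries it out of $\mathcal{D}$. The paper anchors this barrier combinatorially at the extreme edges $e^\mathrm{l}$ and $e^\mathrm{r}$ of $T(C_u)$ (the puppy would only run over them if the human were beyond them, hence outside $\mathcal{D}$), while you anchor it geometrically at the vertical lines $x=x_L$ and $x=x_R$ through the extremes of $C_u$. Your vertical analogue of Lemma~\ref{lemLower} is itself correct, and it needs no ``unbroken wall'' of edges at all: a puppy weakly to the right of a vertical line can only cross it leftwards along a horizontal edge, which is never distance-decreasing while the human is also weakly to the right of that line.

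The genuine gap is in the separation step, i.e., in the claim that any path leaving $\mathcal{D}$ must cross either the top of $\partial R$ or one of the two vertical walls. What your barriers actually prove is that the puppy stays in the closed slab $\{(x,y): x_L\le x\le x_R,\ y\le m\}$; this slab can be strictly larger than $\overline{R}$, and it can meet components of $\gamma\setminus T$ that are not in $\mathcal{D}$. The puppy can leave $\overline{R}$ by walking along edges of $T$ at height $m$, passing over the top vertex of an arm of $C_u$, and dropping into such a component, without ever crossing $x=x_L$, $x=x_R$, or height $m$. Concretely, let $C_u$ be the orthogonal path through $(2,m),(2,0),(0,0),(0,-1),(10,-1),(10,m)$, which has a low protrusion sticking out to the left of its left arm, so that $x_L=0$ while the arm's top is at $x=2$; let $C''$ be a vertical segment from $(1,m)$ down to $(1,0.5)$, let $T$ contain the edges from $(1,m)$ to $(2,m)$ and from $(2,m)$ to $(5,m)$, and let a component $C'$ hang from $(5,m)$ so that $C_u$ is indeed a U-component. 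Then $C''$ lies outside the region bounded by $C_u$ and the line at height $m$ (points above the protrusion escape to infinity on the left), so $C''$ is not contained in $R$; yet if the human stands on the protrusion at $(0,-0.5)$, which is on $C_u$ and hence in $\mathcal{D}$, the puppy sitting at $(2,m)$ has a distance-decreasing move leftwards along the edge of $T$, after which it can drop down into $C''$. This also shows that your supporting claim --- that the outermost vertical edges of $C_u$ reach height $m$ --- is false in general: connectivity of $C_u$ does not give it. In the nested picture of Figure~\ref{fig:components}, where a U-component attains its horizontal extremes at height $m$, your walls coincide with the paper's edges $e^\mathrm{l}$, $e^\mathrm{r}$ and both arguments go through; in the protrusion scenario they come apart, and repairing your proof requires exactly the fact that is at issue, namely that every component the puppy can reach inside the slab counts as dominated --- a point on which the paper's own very terse proof also deserves more care than it gets.
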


\begin{proof}
    Let $e^\mathrm{l}$ and $e^\mathrm{r}$ be the leftmost and rightmost edges in $T(C_u)$, respectively. In order to leave $\mathcal{D}$, the puppy must either run leftwards over $e^\mathrm{l}$ or rightwards over $e^\mathrm{r}$. However, the puppy would only do this if the human was on an edge in $T$ left of $e^\mathrm{l}$ or right of $e^\mathrm{r}$, in which case the human is not in $\mathcal{D}$.
\end{proof}

Consider now the graph $G_\mathcal{D}$ whose vertices are the components that are not dominated by some other component, where two such components are connected whenever they are connected via an edge in $T$. We claim that $G_\mathcal{D}$ is a path: for each vertex $C_i$ of $G_\mathcal{D}$ let $e_i^\mathrm{l}$ and $e_i^\mathrm{r}$ denote the leftmost and rightmost edge in $T(C_i)$, respectively. The intervals 
enclosed by $e_i^\mathrm{l}$ and $e_i^\mathrm{r}$
must be pairwise disjoint by the above observations. Thus, there is a linear order along which the components are connected, implying that $G_\mathcal{D}$ is indeed a path.

The strategy of the human is now the following: walk to the vertex in $G_\mathcal{D}$ corresponding to the component $C$ dominating the component where the puppy currently is. As $G_\mathcal{D}$ is a path, this can always be achieved. Then, remove all components not dominated by $C$ from consideration. By Lemma \ref{lem:u_components}, the puppy will always stay on the remaining graph. If the puppy is also in $C$, remove everything but $C$ from consideration and recurse. Otherwise, walk into the connected component of $\gamma\setminus C$ in which the puppy is, remove everything else from consideration and recurse. By Lemma \ref{lem:u_components} and the above observations, the puppy will again stay on the remaining graph. As the size of the part of the graph embedding $\gamma$ under consideration decreases in each recursion step, at some point the human is in the same component $C$ of $\gamma\setminus T$. The human enters this component via an edge in $T$. Thus, if the puppy is also on an edge in $T$ it must be on the same one and it will run to the human. Otherwise the puppy is below the human at which point we can recurse by considering the topmost edges of $C$.

\bibliography{refs}

\end{document}